\newtheorem{claim}{Claim}[section]
\newtheorem{theorem}[claim]{Theorem}
\newtheorem{remark}[claim]{Remark}
\newtheorem{remarks}[claim]{Remarks}
\newtheorem{corollary}[claim]{Corollary}
\begin{document}
\begin{center}
{\Large{\textbf{Spectral estimates for Dirichlet Laplacians \\ [.01em] and Schr\"odinger operators on geometrically \\ [.3em] nontrivial cusps}}}

\bigskip

{\large{Pavel Exner and Diana Barseghyan}\footnote{The research was partially supported by the Czech Science Foundation within the project P203/11/0701. We are obliged to the referee for useful remarks.}}

\bigskip

\emph{To Elliott Lieb on the occasion of his 80th birthday.}

\end{center}

\bigskip

\textbf{Abstract.} The goal of this paper is to derive estimates of eigenvalue moments for Dirichlet Laplacians and Schr\"odinger operators in regions having infinite cusps which are geometrically nontrivial being either curved or twisted; we are going to show how those geometric properties enter the eigenvalue bounds. The obtained inequalities reflect the essentially one-dimensional character of the cusps and we give an example showing that in an intermediate energy region they can be much stronger than the usual semiclassical bounds.

\bigskip

\textbf{Mathematical Subject Classification (2010).} 35P15, 81Q10.

\bigskip

\textbf{Keywords.} Dirichlet Laplacian, cusped regions, eigenvalue estimates

\section{Introduction} \label{s: intro}
\setcounter{equation}{0}

The object of our interest in this paper will be Schr\"odinger type operators
\begin{equation}
\label{cusp-schroed}
H_\Omega=-\Delta_D^\Omega-V
\end{equation}
with a bounded measurable potential $V\ge 0$ on $L^2(\Omega)$,
where $-\Delta_D^\Omega$ is the Dirichlet Laplacian on a region $\Omega\subset \mathbb{R}^d$. We will be particularly interested is situations where $\Omega$ is unbounded but $H_\Omega$ still has a purely discrete spectrum.

This is not the case, of course, for most open regions; a necessary condition is the quasi-boundedness of $\Omega$ which means the requirement \cite{AF03}
$$
\lim_{\stackrel{x\in\Omega}{|x|\rightarrow\infty}} \mathrm{dist}(x,\partial\Omega)=0\,.
$$
Nevertheless, it is well known --- see \cite{Si83} or \cite{GW11} and references therein --- that for some unbounded regions the spectrum may be purely discrete; typically it happens if $\Omega$ has cusps. The negative spectrum of $H_\Omega$ consists of a finite number of eigenvalues counted with their multiplicities. In this situation one can ask about bounds on the negative spectrum moments in terms of their geometrical properties, in the spirit the seminal work of of Lieb and Thirring \cite{LT76}, or in the present context referring to Berezin, Lieb, Li and Yau \cite{Be72a, Be72b, Li73, LY83}. Estimates of this type have been derived recently in \cite{GW11} for various cusped regions; a typical example is $\Omega = \{ (x,y)\in\mathbb{R}^2:\: |xy|<1 \}$ with hyperbolic ends.

Our aim is the present paper is to investigate situations when such infinite cusps of $\Omega$ are geometrically nontrivial being either curved or twisted and to find in which way does the geometry influence the spectral estimates. First we note that if $\Omega$ can be regarded as a union of subsets of a different geometrical nature it could be useful to make a decomposition and to find estimates from those for separate parts using, say, bracketing technique. Motivated by this observation we will consider in this paper always a single or double cusp-shaped region $\Omega$.

Our strategy in this paper is to employ appropriate coordinate transformations to rephrase the problem as spectral analysis of Schr\"odinger operators on geometrically simpler cusped regions and to deal with the latter using the method proposed in \cite{LW00}. It is naturally not the only possibility. One can apply the reduction procedure to the curved and twisted cusps directly, similarly as it was done for bulged tubes and boundary perturbations in \cite{ELW04}. We do not follow this route here and thus we cannot compare our bounds to those one would obtain in this way; we limit ourselves to the observation that the present approach has the advantage of producing relatively simple bounds in terms of globally defined quantities such as curvature, cusp radius, etc., also in cases when the regions in question have complicated geometry --- e.g., multiple sharp bends or knots in higher dimensions --- when a direct reduction would be cumbersome indeed.

We will start from discussing the simplest case of a curved planar cusp and derive estimates on negative spectrum moments which include a curvature-induced potential describing the effective attractive interaction coming from the region geometry. After this motivating considerations we are going to proceed to generalization to curved cusps in  $\mathbb{R}^d,\: d\ge 2$; before doing that we shall present in Section~\ref{s: BLY} an example showing that for regions with finitely cut cusps the obtained inequality can be at intermediate energies much stronger than the usual estimate using phase-space volume. In Section~\ref{s: twisted} we will consider cusps of a non-circular cross section in $\mathbb{R}^3$ which are straight but twisted. The geometry of the region will be again involved in the obtained eigenvalue estimates, now in a different way than for curved cusps, because the effective interaction associated with twisting is repulsive rather than attractive\footnote{The repulsive character of the effective interaction coming from twisting was noticed first in \cite{CB96}; a rigorous way to express this fact can be stated in terms of the appropriate Hardy-type inequality \cite{EKK08}.}.

\section{A warm-up: curved planar cusps} \label{s: twodim}
\setcounter{equation}{0}

We start with an unbounded cusp-shaped $\Omega\subset \mathbb{R}^2$ assuming that its boundary is sufficiently smooth, to be specified below. To describe the region $\Omega$ we fix first a curve regarded as its axis and employ the natural locally orthogonal coordinates in its vicinity, in analogy with the theory of quantum waveguides \cite{ES89}, which will be used to ``straighten'' the cusp translating its geometric properties into those of the coefficients of the resulting operator.

To be specific, we characterize our region by three functions: sufficiently smooth $a,b:\: \mathbb{R} \to \mathbb{R}^2$ and a positive continuous $f:\: \mathbb{R} \to \mathbb{R}^+$, in such a way that
 \begin{equation} \label{Omega-2}
\Omega :=\{ (a(s)-u \dot b(s),b(s)+u \dot a(s)):\: s\in\mathbb{R},\, |u|<f(s)\, \}\,,
 \end{equation}
where dot marks the derivative with respect to $s\,$; to make the region $\Omega$ cusp-shaped we shall always suppose that
 \begin{equation} \label{Omega-cusp}
\lim_{|s|\to\infty} f(s)=0\,.
 \end{equation}
Dealing with a single cusp would mean, of course, to consider
 \begin{equation} \label{Omega-2+}
\Omega^+ :=\{ (a(s)-u \dot b(s),b(s)+u \dot a(s)) \in \Omega :\: s\ge 0\, \}\,,
 \end{equation}
or any other semi-infinite interval of the longitudinal variable, but it is convenient to treat first the case with two cusp-shaped ends.

Since the reference curve $\Gamma=\{ (a(s),b(s)):\:  s\in\mathbb{R}\}$ can be always parametrized by its arc length we may suppose without loss of generality that $\dot a(s)^2+\dot b(s)^2=1$ and $s$ is the arc length. The signed curvature $\gamma(s)$ of $\Gamma$ is then given by
 $$ 
\gamma(s)=\dot b(s)\ddot a(s)-\dot a(s)\ddot b(s)\,,
 $$ 
and the region (\ref{Omega-2}) is \emph{de facto} determined by the functions $\gamma$ and $f$ only because the Cartesian coordinates of the points of $\Gamma$ can be obtained from
\begin{eqnarray*}
a(s)=a(s_0)+\int_{s_0}^s\cos\left(\int_{s_0}^t \gamma(\xi)\,\mathrm{d}\xi\right)\,\mathrm{d}t,
\\
b(s)=b(s_0)+\int_{s_0}^s\sin\left(\int_{s_0}^t \gamma(\xi)\,\mathrm{d}\xi\right)\,\mathrm{d}t
\end{eqnarray*}
with a fixed point $s_0$, possibly modulo Euclidean transformations of the plane.

The characterization of the region (\ref{Omega-2}) through the curvilinear coordinate makes sense only if the latter can be uniquely defined which imposes two different restrictions. First of all, the transverse size must not be too large, the inequality $|f(s)\gamma(s)|<1$ must hold at any point of the curve. In addition, we have to assume that the used coordinate system makes sense globally, i.e. that the map $(s,u) \mapsto (a(s)-u \dot b(s),b(s)+u \dot a(s))$ is injective, in other words, that the region $\Omega$ does not intersect itself. Note that the last hypothesis can be relaxed if we consider (\ref{cusp-schroed}) not as an operator in $L^2(\mathbb{R}^2)$ but instead acting on $L^2$ functions in an appropriate covering space of the plane.

Under the condition (\ref{Omega-cusp}) the region is quasi-bounded so it may have a purely discrete spectrum. It is indeed the case; recall that the necessary and sufficient condition for the purely discrete spectral character \cite[Theorem~2.8]{BS72} is that we can cover $\Omega$ by a family of unit balls whose centres tend to infinity in such a way that the volumes of their intersections with $\Omega$ tend to zero; it is not difficult to construct such a ball sequence if (\ref{Omega-cusp}) is valid.

Our main aim here is to prove bounds on the eigenvalue moments; as usual when dealing with a Lieb-Thirring type problem we restrict our attention to the negative part of the spectrum noting that it can be always made non-empty by including a suitable constant into the potential.
\begin{theorem} \label{thm: 2D bound}
Consider the Schr\"odinger operator (\ref{cusp-schroed}) on the region (\ref{Omega-2}). Suppose that the curvature $\gamma\in C^4$, the inequality  $\|f(\cdot)\gamma(\cdot) \|_{L^\infty(\mathbb{R})}<1$ holds true, and $\Omega$ does not intersect itself. Then for any $\sigma\ge3/2$ we have the estimate\footnote{The positive and negative part of a quantity $x$ are conventionally defined as $x_\pm = \frac12(|x|\pm x)$.}
\begin{eqnarray}
\nonumber
\mathrm{tr}\left(H_\Omega\right)_-^\sigma
\le\left\|1+f|\gamma|\right\|_\infty^{-2\sigma}\,L_{\sigma,1}
^{\mathrm{cl}}\int_{\mathbb{R}}\sum_{j=1}^\infty\biggl(-\left(\frac{\pi j}{2f(s)}
\right)^2 +\|1+f|\gamma|\|_\infty^2 W^-(s)
\\
\label{eq2.4}
+\|1+f|\gamma|\|_\infty^2 \|\widetilde{V}(s,\cdot)\|_\infty\biggr)_+^{\sigma+1/2}\,\mathrm{d}s\,,
\end{eqnarray}
where $\|\cdot\|_\infty \equiv \|\cdot\|_{L^\infty(\mathbb{R})}$ and
$L_{\sigma,1}^{\mathrm{cl}}$ is the known semiclassical constant,
 \begin{equation} \label{LTconstant}
L_{\sigma,1}^{\mathrm{cl}} := \frac{\Gamma(\sigma+1)}{\sqrt{4\pi} \Gamma(\sigma+\frac32)}\,,
 \end{equation}
and furthermore, we have introduced
$$
W^-(s) :=\frac{\gamma(s)^2}{4\left(1-f(s)|\gamma(s)|\right)^2} +\frac{f(s)|\ddot\gamma(s)|}{2\left(1-f(s)|\gamma(s)|
\right)^3} +\frac{5f^2(s)\dot\gamma(s)^2}{4\left(1-f(s)|\gamma(s)|\right)^4}
$$
and $\widetilde{V}(s,u):=V\big(a(s)-u \dot b(s),b(s)+u\dot a(s)\big)$.
 \end{theorem}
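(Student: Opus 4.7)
The plan is to straighten the cusp using the curvilinear parametrisation in~(\ref{Omega-2}), extract the standard curvature-induced potential via a weight unitary, deduce an operator inequality $H_\Omega \ge c^{-2}\hat H$ with $c:=\|1+f|\gamma|\|_\infty$ for a comparison operator $\hat H$ on the straightened strip, and then invoke the operator-valued Lieb--Thirring inequality of Laptev--Weidl~\cite{LW00}. Under the injectivity hypothesis together with $\|f|\gamma|\|_\infty<1$, the map $(s,u)\mapsto(a(s)-u\dot b(s),b(s)+u\dot a(s))$ is a diffeomorphism of $\Omega_0:=\{(s,u):s\in\mathbb R,\,|u|<f(s)\}$ onto $\Omega$ with Jacobian $1-u\gamma(s)$. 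Pulling back and then applying the unitary $\psi\mapsto(1-u\gamma)^{1/2}\psi$ transforms $H_\Omega$ into a unitarily equivalent operator on $L^2(\Omega_0)$ whose quadratic form reads
$$
\mathcal E[\psi]=\int_{\Omega_0}\!\left[\frac{|\partial_s\psi|^2}{(1-u\gamma)^2}+|\partial_u\psi|^2+W_g(s,u)|\psi|^2-\widetilde V(s,u)|\psi|^2\right]ds\,du,
$$
where $W_g$ is the standard geometric potential, a rational expression in $\gamma,\dot\gamma,\ddot\gamma$ and $u$. A term-by-term estimate using $|u|<f(s)$ and $|f\gamma|<1$ then yields the pointwise inequality $W_g(s,u)\ge -W^-(s)$ with $W^-$ as in the statement; the $C^4$-hypothesis on $\gamma$ enters here only to guarantee continuity of $\dot\gamma$ and $\ddot\gamma$.

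Next, I would combine the pointwise bounds $(1-u\gamma)^{-2}\ge c^{-2}$, $1\ge c^{-2}$, $W_g\ge -W^-$, and $\widetilde V\le\|\widetilde V(s,\cdot)\|_\infty$ to obtain
$$
\mathcal E[\psi]\ge c^{-2}\int_{\Omega_0}\!\Bigl[|\partial_s\psi|^2+|\partial_u\psi|^2-c^2\bigl(W^-(s)+\|\widetilde V(s,\cdot)\|_\infty\bigr)|\psi|^2\Bigr]ds\,du,
$$
which is $c^{-2}$ times the form of the comparison operator $\hat H:=-\partial_s^2-\partial_u^2-c^2 W^-(s)-c^2\|\widetilde V(s,\cdot)\|_\infty$ on $L^2(\Omega_0)$ with Dirichlet data on $|u|=f(s)$. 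The small but essential trick is to replace the already non-negative $|\partial_u\psi|^2$ by $c^{-2}|\partial_u\psi|^2$ (valid because $c\ge 1$); this permits the factor $c^{-2}$ to be extracted uniformly \emph{without} rescaling the transverse Dirichlet Laplacian inside $\hat H$, which is precisely what matches the structure of~(\ref{eq2.4}). The operator inequality $H_\Omega\ge c^{-2}\hat H$, together with monotonicity of $\mathrm{tr}(\cdot)_-^\sigma$, then gives $\mathrm{tr}(H_\Omega)_-^\sigma\le c^{-2\sigma}\,\mathrm{tr}(\hat H)_-^\sigma$.

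To finish, I would view $\hat H$ as the one-dimensional Schrödinger operator $-\partial_s^2+\mathbf T(s)$ on $L^2(\mathbb R)$ with operator-valued potential $\mathbf T(s):=-\partial_u^2-c^2 W^-(s)-c^2\|\widetilde V(s,\cdot)\|_\infty$ acting on the fibre $L^2((-f(s),f(s)))$ with Dirichlet boundary. Its eigenvalues are $(\pi j/(2f(s)))^2-c^2 W^-(s)-c^2\|\widetilde V(s,\cdot)\|_\infty$ for $j\in\mathbb N$, and the Laptev--Weidl inequality~\cite{LW00}, which carries the sharp classical constant $L_{\sigma,1}^{\mathrm{cl}}$ for every $\sigma\ge 3/2$, delivers exactly the series-and-integral expression on the right of~(\ref{eq2.4}), up to the overall factor $c^{-2\sigma}$. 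The principal obstacle I anticipate is the measurable-fibres subtlety: since the fibre $L^2((-f(s),f(s)))$ varies with $s$, one must first trivialise to a fixed fibre, e.g.\ through the isometry $L^2((-f(s),f(s)))\simeq L^2((-1,1))$, before directly citing Laptev--Weidl; this trivialisation preserves $\mathrm{tr}\,\mathbf T(s)_-^{\sigma+1/2}$ so that the final step is only a routine reformulation of the sharp one-dimensional operator-valued inequality.
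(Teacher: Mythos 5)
Your overall strategy coincides with the paper's: straighten the cusp with the standard curvilinear coordinates and the weight unitary, bound the curvature-induced potential pointwise by $-W^-(s)$, extract the uniform factor $\|1+f|\gamma|\|_\infty^{-2}$ from the Jacobian in the longitudinal kinetic term, and finish via the operator-valued Lieb--Thirring inequality of Laptev--Weidl. The inequality $H_0\ge c^{-2}H_0^-$ and the reduction to a $u$-independent transverse potential are exactly the paper's steps.

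The one place where your argument has a real gap is the very issue you flag as ``routine'': passing from the operator $\hat H$ on $L^2(\Omega_0)$ to an operator of the form $-\partial_s^2\otimes I+\mathbf T(s)$ on a fixed tensor-product space. Your proposed fix --- the scaling isometry $L^2((-f(s),f(s)))\simeq L^2((-1,1))$ --- does not achieve this. That unitary is $s$-dependent, so conjugating the longitudinal part $-\partial_s^2$ by it produces cross-derivative terms involving $\dot f(s)$ and $v\partial_v$; the transformed operator is not of the form $-\partial_s^2+\tilde{\mathbf T}(s)$, and Laptev--Weidl does not apply to it as written. Preserving $\mathrm{tr}\,\mathbf T(s)_-^{\sigma+1/2}$ is necessary for the right-hand side of the inequality to come out right, but it is not sufficient to justify invoking the theorem. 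The paper handles the varying fibres differently: it embeds $L^2(\Omega_0)$ into $L^2(\mathbb R^2)=L^2(\mathbb R)\otimes L^2(\mathbb R)$ by extension by zero, compares the form of $H_0^-\oplus\bigl(-\Delta_D^{\widehat\Omega_0}\bigr)$ on $C_0^\infty(\mathbb R^2\setminus\partial\Omega_0)$ with the form of $-\partial_s^2\otimes I+H(s,\widetilde V,W^-)$ defined on the genuinely larger domain $\mathcal H^1(\mathbb R,L^2(\mathbb R))$, where the transverse operator is extended by zero outside $(-f(s),f(s))$, and then uses the positivity of $-\Delta_D^{\widehat\Omega_0}$ together with the minimax principle to conclude $\mathrm{tr}(H_0^-)_-^\sigma\le\mathrm{tr}\bigl(-\partial_s^2\otimes I+H(s,\widetilde V,W^-)\bigr)_-^\sigma$. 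No rescaling is needed, no cross-terms appear, and Laptev--Weidl then applies directly. Replace your trivialisation step by this extension-by-zero embedding and the proof is complete; the rest of your reasoning is sound (with the caveat that the $C^4$ assumption is needed for the straightening transformation to be regular enough, not merely for continuity of $\dot\gamma,\ddot\gamma$).
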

\begin{proof}[Proof:]
Using the standard ``straightening'' transformation \cite{ES89} we infer that $H_\Omega$ is unitarily equivalent to the operator $H_0$ on $L^2(\Omega_0)$ acting as
$$
(H_0\psi)(s,u) = -\frac{\partial}{\partial s}\left(\frac{1}{(1+u\gamma(s))^2}\frac{\partial\psi}{\partial s}(s,u)\right)-\frac{\partial^2\psi}{\partial u^2}(s,u)+((W-\widetilde{V})\psi)(s,u)\,,
$$
where $\Omega_0=\{(s,u):\: s\in\mathbb{R},\, |u|<f(s)\}$, the curvature-induced potential is
$$
W(s,u):= -\frac{\gamma^2(s)}{4(1+u\gamma(s))^2}+\frac{u\ddot\gamma(s)}{2(1+u\gamma(s))^3}-\frac{5}{4}
\frac{u^2\dot\gamma^2(s)}{(1+u\gamma(s))^4}
$$
and Dirichlet boundary conditions are imposed at $u=\pm f(s)$. In view of the unitary equivalence it is enough to establish inequality (\ref{eq2.4}) for the operator $H_0$.

Next we employ a simple minimax-principle estimate. Consider the operator $H_0^-$ defined on the domain $\mathcal{H}^2_0 (\Omega_0)$ in $L^2(\Omega_0)$ by
$$
H_0^-=-\Delta_D^{\Omega_0} -\left\|1+f|\gamma|\right\|_\infty^2(W^- +\widetilde{V})\,,
$$
where $-\Delta_D^{\Omega_0}$ is as usual the corresponding Dirichlet Laplacian; it is obvious that
 \begin{equation}
 \label{eq2.5}
H_0\ge\left\|1+f|\gamma|\right\|_\infty^{-2}\,H_0^-
 \end{equation}
holds true, hence we may get the desired result by establishing a bound on the trace of the operator $(H_0^-)^\sigma$.

We take inspiration from \cite{Wei08} and use a variational argument to reduce the problem to a Lieb-Thirring inequality with operator-valued potential. Given a function $g\in\,C_0^\infty (\Omega_0)$ we can write
\begin{eqnarray*}
\lefteqn{\|\nabla\,g\|^2_{L^2(\Omega_0)}
-\left\|1+f|\gamma|\right\|_\infty^2\int_{\Omega_0}
\left(W^- +\widetilde{V}\right)(s,u)\, |g(s,u)|^2\,\mathrm{d}s\,\mathrm{d}u} \\ && =\int_{\Omega_0}\left|\frac{\partial g}{\partial s}(s,u)\right|^2\,\mathrm{d}s\,\mathrm{d}u
+\int_{\mathbb{R}}\,\mathrm{d}s \int_{-f(s)}^{f(s)}\bigg(\left|\frac{\partial g}{\partial u}(s,u)\right|^2 \\ &&
\quad -\left\|1+f|\gamma|\right\|_\infty^2 \left(W^- +\widetilde{V}\right)(s,u)\, |g(s,u)|^2\bigg)\,\mathrm{d}u
\\ &&
\ge\int_{\Omega_0}\left|\frac{\partial g}{\partial s}(s,u)\right|^2\,\mathrm{d}s\,\mathrm{d}u +\int_{\mathbb{R}}\left\langle\,
H\left(s,\widetilde{V},W^-\right)g(s,\cdot), g(s,\cdot)\right\rangle_{L^2(-f(s),f(s))}
\,\mathrm{d}s\,,
\end{eqnarray*}
where $H(s,\widetilde{V},W^-)$ is the negative part of Sturm-Liouville operator
$$
-\frac{\mathrm{d}^2}{\mathrm{d}u^2} -\left\|1+f|\gamma|\right\|_\infty^2
\left(W^- +\widetilde{V}\right)
$$
defined on $C_0^\infty(-f(s),f(s))$ with Dirichlet conditions at $u=\pm f(s)$.

Next we introduce the complement $\widehat{\Omega}_0 :=\mathbb{R}^2 \backslash\overline{\Omega}_0$ to the straightened region $\Omega_0$ and consider functions of the form $h=g+v$ with $g\in C_0^\infty(\Omega_0)$ and $v\in C_0^\infty(\widehat{\Omega}_0)$ which we may regard as functions in $\mathbb{R}^2$ extending them by zero to $\widehat{\Omega}_0$ and $\Omega_0$, respectively. Similarly we extend $H(s,\widetilde{V},W^-)$ to the operator on $C_0^\infty(\mathbb{R})$ acting as $H(s,\widetilde{V},W^-) \oplus 0$ with the zero component on $C_0^\infty \left(\mathbb{R} \backslash[-f(s),f(s)]\right)$. We have
\begin{eqnarray*}
\lefteqn{\|\nabla\,g\|^2_{L^2(\Omega_0)} +\|\nabla\,v\|^2_{L^2(\widehat{\Omega}_0)} -\left\|1+f|\gamma|\right\|_\infty^2 \int_{\Omega_0}\left(W^- +\widetilde{V}\right)(s,u) |g(s,u)|^2\,\mathrm{d}s\,\mathrm{d}u}
\\ &&
\ge\int_{\mathbb{R}^2}\left|\frac{\partial h}{\partial s}(s,u)\right|^2\,\mathrm{d}s\,\mathrm{d}u +\int_{\mathbb{R}}\left\langle
\,H(s,\widetilde{V},W^-)\,
h(s,\cdot),\,h(s,\cdot)\right\rangle_{L^2(\mathbb{R})}
\,\mathrm{d}s\,. \phantom{AAAA}
\end{eqnarray*}
This inequality holds true for any function $g\in C_0^\infty\left(\mathbb{R}^2\backslash\partial\Omega_0\right)$  and its left-hand side is the quadratic form corresponding to the operator $H_0^- \oplus \left(-\Delta^{\widehat{ \Omega}_0}_D\right)$, while the right-hand one is the form associated with the operator
$$
-\frac{\partial^2}{\partial\,s^2}\otimes\,I_{L^2(\mathbb{R})} +H\left(s,\widetilde{V},W^-\right)
$$
defined on the larger domain $\mathcal{H}^1\left(\mathbb{R}, L^2(\mathbb{R})\right)$. Since  $-\Delta^{ \widehat{\Omega}_0}_D$ is positive, we infer from the minimax principle that
$$
\mathrm{tr}\,(H_0^-)_-^\sigma \le \,\mathrm{tr}\left(-\frac{\partial^2}{\partial\,s^2}\otimes \,I_{L^2(\mathbb{R})}+ H(s,\widetilde{V},W^-)\right)_-^\sigma
$$
holds for any nonnegative number $\sigma$. This makes it possible to employ the version of Lieb-Thirring inequality for operator-valued potentials [LW00] for operator valued potentials which yields
\begin{equation}
\label{eq2.7}
\mathrm{tr}\,(H_0^-)_-^\sigma\le\,L_{\sigma,1}^{\mathrm{cl}} \int_{\mathbb{R}}\mathrm{tr}\,\left(H\left(s,\widetilde{V},W^-\right)
\right)_-^{\sigma+1/2}\, \mathrm{d}s\,,\;\;\sigma\ge3/2\,,
\end{equation}
with the semiclassical constant $L_{\sigma,1}^{\mathrm{cl}}$.
Define next Sturm-Liouville operator $L_f(s)$ on $L^2\left(-f(s), f(s)\right)$ acting as
$$
L(s,\widetilde{V},W^-)=-\frac{\mathrm{d}^2}{\mathrm{d}u^2}-\left\|1+f|\gamma|\right\|_\infty^2
\biggl(W^-(s) +\|\widetilde{V}(s,\cdot)\|_\infty\biggr)
$$
with Dirichlet conditions at $u=\pm f(s)$. In view of the last two inequalities and the minimax principle we find that
\begin{equation}
\label{spectrum}
\mathrm{tr}\,(H_0^-)_-^\sigma\le\,L_{\sigma,1}^{\mathrm{cl}}
\int_{\mathbb{R}} \mathrm{tr}\left(L_f\left(s,\widetilde{V},
W^-\right)\right)_-^{\sigma+1/2}.
\end{equation}
holds for any $\sigma\ge3/2$. Now we need  $-\mu_j(s),\, j=1,2,\ldots$, being the negative eigenvalues of $L_f(s,\widetilde{V},W^-)$ which can be easily found using the fact $L(s,0,0)$ as the Dirichlet Laplacian on $(-f(s),f(s))$ has eigenvalues $\big(\frac{\pi j}{2f(s)}\big)^2,\: j=1,2,\dots\,,$ and the potential is independent of $u$. This makes it possible to evaluate the right-hand side (\ref{spectrum}) and using the estimate (\ref{eq2.5}) we establish for any $\sigma\ge3/2$ the inequality
\begin{eqnarray*}
\mathrm{tr}\left(H_0\right)_-^\sigma
\le\left\|1+f|\gamma|\right\|_\infty^{-2\sigma}\,L_{\sigma,1}
^{\mathrm{cl}}\int_{\mathbb{R}}\sum_{j=1}^\infty\biggl(-\left(\frac{\pi j}{2f(s)}
\right)^2 +\|1+f|\gamma|\|_\infty^2 W^-(s)
\\
\nonumber
+\left\|1+f|\gamma|\right\|_\infty^2\|\widetilde{V}(s,\cdot)\|_\infty\biggr)_+^{\sigma+1/2}\,\mathrm{d}s\,,
\end{eqnarray*}
and the unitary equivalence between $H_\Omega$ and $H_0$ yields the sought claim.
\end{proof}

\begin{corollary} \label{2cor}
Consider the operator $H_{\Omega^+}$ on the region (\ref{Omega-2+}) defined in analogy with (\ref{cusp-schroed}) with Dirichlet condition at $s=0$. The inequality (\ref{eq2.4}) holds again with integration variable running now over the interval $(0,\infty)$.
\end{corollary}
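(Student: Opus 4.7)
The plan is to transcribe the proof of Theorem~\ref{thm: 2D bound} almost verbatim, replacing the two-sided reference axis $\mathbb{R}$ by the half-line $(0,\infty)$. First I would apply the same straightening transformation to $\Omega^+$, producing an operator unitarily equivalent to $H_{\Omega^+}$ on $L^2(\Omega_0^+)$ with $\Omega_0^+=\{(s,u):\,s>0,\,|u|<f(s)\}$; the principal part, the curvature-induced potential $W$, and the pulled-back potential $\widetilde{V}$ take the same form as in Theorem~\ref{thm: 2D bound}, and Dirichlet conditions are imposed on the full boundary $\partial\Omega_0^+$, now including the transverse segment $\{0\}\times[-f(0),f(0)]$ coming from the cut at $s=0$. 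The comparison operator $H_0^-$ and the pointwise bound (\ref{eq2.5}) are defined exactly as before, so it remains to estimate $\mathrm{tr}\,(H_0^-)_-^\sigma$ on the half-strip.

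Next I would repeat the variational step without modification. For $g\in C_0^\infty(\Omega_0^+)$ and $v\in C_0^\infty(\widehat{\Omega}_0^+)$ with $\widehat{\Omega}_0^+:=\mathbb{R}^2\setminus\overline{\Omega}_0^+$ (a complement that now includes the entire half-plane $s<0$ together with the exterior of the cusp for $s>0$), the combined function $h=g+v$ belongs to $H^1(\mathbb{R}^2)$ because both summands vanish near $\partial\Omega_0^+$. Setting $H(s,\widetilde{V},W^-)$ to be the zero operator for $s\le 0$, the same chain of inequalities yields
$$
\mathrm{tr}\,(H_0^-)_-^\sigma \le \mathrm{tr}\left(-\frac{\partial^2}{\partial s^2}\otimes I_{L^2(\mathbb{R})}+H(s,\widetilde{V},W^-)\right)_-^\sigma,
$$
and the operator-valued Lieb--Thirring inequality of \cite{LW00} applies just as in (\ref{eq2.7}).

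The observation that closes the argument is that the extended potential $H(s,\widetilde{V},W^-)$ vanishes identically for $s\le 0$, so its negative part contributes nothing to the integrand on the right-hand side, and the $s$-integration effectively reduces to $(0,\infty)$. The evaluation of the transverse eigenvalues $(\pi j/2f(s))^2$ and the final inversion of the unitary equivalence proceed without change. I do not expect a genuine obstacle: the only delicate point is the new Dirichlet segment at $s=0$, but this is handled automatically by the fact that $g$ and $v$ are compactly supported off $\partial\Omega_0^+$, so no boundary terms arise in the integration by parts, and the reasoning leading to the operator-valued inequality goes through unchanged.
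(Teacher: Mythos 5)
Your proposal is correct, and it takes a genuinely different route from the paper's one-line argument. You re-run the entire proof of Theorem~\ref{thm: 2D bound} on the half-strip $\Omega_0^+$: the complement $\widehat{\Omega}_0^+$ now swallows the whole half-plane $\{s<0\}$, so the zero extension of $H(s,\widetilde{V},W^-)$ to $L^2(\mathbb{R})$-valued potential vanishes identically for $s\le 0$, the Laptev--Weidl integral over $\mathbb{R}$ picks up contributions only from $(0,\infty)$, and the rest of the chain (transverse eigenvalues $(\pi j/2f(s))^2$, undoing the straightening) goes through verbatim. This is a complete, self-contained derivation. The paper instead invokes Dirichlet bracketing: imposing a Dirichlet condition on the segment $\{s=0,\,|u|<f(0)\}$ gives $H_\Omega\le H_{\Omega^+}\oplus H_{\Omega^-}$, hence $\mathrm{tr}\,(H_{\Omega^+})_-^\sigma\le\mathrm{tr}\,(H_\Omega)_-^\sigma$, and then appeals to Theorem~\ref{thm: 2D bound}. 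Note, however, that this bracketing literally bounds $\mathrm{tr}\,(H_{\Omega^+})_-^\sigma$ by the right-hand side of (\ref{eq2.4}) integrated over all of $\mathbb{R}$, which is weaker than the claimed bound over $(0,\infty)$; to actually localize the integral one must, as you do, re-enter the proof and observe that the operator-valued potential can be taken to vanish for $s\le 0$. So your longer route is in fact the one that cleanly delivers the statement as written. One small point you could add: in the one-sided setting the various $L^\infty$ norms of $f$, $\gamma$, $\dot\gamma$, $\ddot\gamma$ are naturally taken over $(0,\infty)$, which can only make the bound tighter.
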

\begin{proof}
The claim follows by a bracketing argument. Imposing Dirichlet condition at the segment $\{ (s,u):\: s=0,\, |u|<f(0)\}$ we get $H_\Omega \le H_{\Omega^+} \oplus H_{\Omega^-}$ with the obvious consequence for the negative eigenvalues of these operators.
\end{proof}

\begin{remarks} \label{thickend}
{\rm (a) Note that we have not used the condition (\ref{Omega-cusp}). If it is not satisfied, the spectrum of $H_\Omega$ may not be purely discrete; the proved inequality remains valid as long as we stay below $\inf\sigma_\mathrm{ess} (H_\Omega)$. It is also useful to notice that the validity of the result can be in analogy with \cite{EW00, ELW04} extended to any $\sigma\ge 1/2$; the price we have to pay is only a change in the constant, $L_{\sigma,1}^{\mathrm{cl}}$ being replaced with $r(\sigma,1) L_{\sigma,1}^{\mathrm{cl}}$ with the factor $r(\sigma,1)\le 2$ if $\sigma<3/2$. \\
(b) If (\ref{Omega-cusp}) is satisfied the argument leading (\ref{eq2.4}) shows, in particular, that the negative spectrum of $H_D^\Omega$ is discrete and since we may add an arbitrary constant to the potential, it gives us an alternative way to demonstrate the purely discrete character of the spectrum.
}
\end{remarks}

\section{Comparison with phase space estimate}
\label{s: BLY}
\setcounter{equation}{0}

Having different eigenvalue estimate one asks naturally how they do compare. For bounded regions we have a standard estimate, a modification of Lieb-Thirring inequality based on assessment of the phase-space volume, obtained first by Berezin \cite{Be72a, Be72b} and Lieb \cite{Li73}, and in another form later by Li and Yau  \cite{LY83, Wei08}. There is no doubt that it provides an estimate with the correct semiclassical behaviour. Our point is this section is to show that if the region has ``thin'' parts there may exist an intermediate interval of energies where the estimates of the type discussed in the previous section are considerably stronger than the the classical inequality mentioned above. This different spectral behavior is somewhat reminiscent to the two-term Lieb-Thirring estimates derived in \cite{EW00} for Schr\"odinger operators in straight Dirichlet tubes. Similarly as there one has energy regions when the one-dimensional or the multi-dimensional character of the region dominates; the difference is that here the coupling is purely geometrical.

The standard trick to study spectra of Dirichlet Laplacians below certain value is to consider operator (\ref{cusp-schroed}) with a constant, possibly large, potential $-V$ where $V(s,u)=\Lambda>0$, and to look for its negative spectrum. Consider first again the region (\ref{Omega-2}) satisfying the condition (\ref{Omega-cusp}) together with
 \begin{eqnarray}
 \|f\gamma\|_\infty < c < \frac{-\pi-1+\sqrt{(\pi+1)^2 +4\pi}}{2} \approx 0.655\,, \label{fgamma} \\[.5em]
 \max\{ \|f\dot\gamma\|_\infty,\, \|f\ddot\gamma\|_\infty\} < 1\,. \label{fgammader} \phantom{AAAAAAA}
 \end{eqnarray}
Put $W_\Lambda^-(s):=W^-(s)+\Lambda$. In view of the assumption (\ref{fgamma}) and Theorem~\ref{thm: 2D bound} we can estimate $\mathrm{tr}\,(H_\Omega)_-^\sigma$ for any $\sigma\ge3/2$ from above by
 \begin{eqnarray*}
\lefteqn{ \left\|1+f|\gamma|\right\|_\infty^{-2\sigma}\, L_{\sigma,1}^{\mathrm{cl}}\int_{\mathbb{R}}\sum_{j=1}^\infty
\left(-\left(\frac{\pi\,j}{2f(s)}\right)^2 +\left\|1+f|\gamma|
\right\|_\infty^2W_\Lambda^-(s)\right)_+^{\sigma+1/2}\,\mathrm{d}s}
 \\ &&
\le\left\|1+f|\gamma|\right\|_\infty\,L_{\sigma,1}^{\mathrm{cl}}
\int_{f(s)\ge\frac{\pi}{2(1+c)}W_\Lambda^-(s)^{-1/2}} \sum_{j=1}^{\left[2(1+c)f(s) W_\Lambda^-(s)^{1/2}/\pi\right]} W_\Lambda^-(s)^{\sigma+1/2}\,\mathrm{d}s
 \\ &&
\le\frac{2(1+c)}{\pi}\,\left\|1+f|\gamma|\right\|_\infty\,L_{\sigma,1}^{\mathrm{cl}}
\int_{f(s)\ge\frac{\pi}{2(1+c)} W_\Lambda^-(s)^{-1/2}} W_\Lambda^-(s)^{\sigma+1}\,f(s)\,\mathrm{d}s
 \\ &&
\le\frac{8}{\pi}\,L_{\sigma,1}^{\mathrm{cl}}\int_{ f(s)\ge\pi\left(\left(\frac{1+c}{1-c}\right)^2\gamma(s)^2+
4(1+c)^2\Lambda\right)^{-1/2}}
W_\Lambda^-(s)^{\sigma+1}\,f(s)\,\mathrm{d}s\,.
 \end{eqnarray*}
Notice that for a fixed $f$ the right-hand side of the last inequality reaches its maximum if $\gamma(s)f(s)=c$. Consequently, setting $\alpha_c^2:= \frac{\pi^2-c^2(1+c)^2/(1-c)^2}{4(1+c)^2}$ which is a positive number under the assumption (\ref{fgamma}) we get
 \begin{eqnarray}
\lefteqn{\mathrm{tr}\,(H_\Omega)_-^\sigma\le\frac{8}{\pi}\,
L_{\sigma,1}^{\mathrm{cl}} \int_{f(s)\ge\alpha_c\Lambda^{-1/2}} \left(\frac{c^2}{4(1-c)^2f^2(s)}+\Lambda\right)^{\sigma+1}\,f(s)\,\mathrm{d}s}
\nonumber \\ &&
\label{eq2.11}
\le\frac{8}{\pi}\,\left(\frac{c^2}{4(1-c)^2\alpha_c^2}+1\right)^{\sigma+1}\, L_{\sigma,1}^{\mathrm{cl}}\,\Lambda^{\sigma+1} \int_{f(s)\ge\alpha_c\Lambda^{-1/2}} f(s)\,\mathrm{d}s\,; \phantom{AAA}
 \end{eqnarray}
note that the curvature is present in this estimate through the constant $c$ only.

Let us now show that such an estimate can be stronger than the phase-space bound mentioned above which says that the operator  $H_{\Omega'}$ defined by (\ref{cusp-schroed}) on an open bounded region $\Omega'$ with constant potential
$V=\Lambda,\,\Lambda>0$ satisfies
\begin{equation}
\label{eq2.12}
\mathrm{tr}\left(H_{\Omega'}\right)_-^\sigma\le L_{\sigma,1}^{\mathrm{cl}}\Lambda^{\sigma+1}\,
\mathrm{vol}\,(\Omega')\,,\quad \sigma\ge1\,.
\end{equation}
First we shall construct an unbounded cusped region $\Omega$ determined by functions $\gamma$ and
$f$  satisfying the conditions (\ref{Omega-cusp}), (\ref{fgamma}) and (\ref{fgammader}), then we will pass to cut-off regions $\Omega'\subset\Omega$ such that
\begin{equation}
\label{eq2.13}
\mathrm{tr}\left(H_\Omega'\right)_-^\sigma\le \mathrm{tr}\left(H_\Omega\right)_-^\sigma\,,\quad \sigma\ge0\,.
\end{equation}
We choose an arbitrary positive number $\alpha$ and a natural number $N$ and set
 \begin{eqnarray}
f_{\alpha,N}(x): =\frac{\pi}{2}\:x^{-1-\alpha} \qquad\text{for}\quad |x|>N\,, \\
f_{\alpha,N}(x): =\frac{\pi}{2}\:N^{-1-\alpha}\qquad\text{for}\quad |x|\le N\,.
 \end{eqnarray}
Since we have mentioned that the curvature does not play a substantial role in the estimate (\ref{eq2.11}) we put it equal to zero and consider the straight region $\Omega_{\alpha,N} :=\{x\in \mathbb{R}:\: |y|<f_{\alpha,N}(x) \}$.
Then the operator $H_{\Omega_{\alpha,N}}= -\Delta_D^{\Omega_{\alpha,N}} -\Lambda$ satisfies for any $\Lambda\le N^{2(1+\alpha)}$ the following inequality
 \begin{equation}
 \label{eq2.14}
\mathrm{tr}\left(H_{\Omega_{\alpha,N}}\right)_-^\sigma \le\frac{8}{\pi}\,L_{\sigma,1}^{\mathrm{cl}}\Lambda^{\sigma+1}
\int_{f(x)\ge\frac{\pi}{2}\Lambda^{-1/2}} f(s)\,\mathrm{d}x
\le 4\,L_{\sigma,1}^{\mathrm{cl}} \Lambda^{\sigma+1}\,N^{-\alpha}\,.
 \end{equation}
Consider now the finite region $\Omega'_{\alpha,N}:= \{|x|<2^{1/\alpha}N,\,|y|<f_{\alpha,N}(x)\}$ and use Dirichlet bracketing: the left-hand side of the last inequality is not smaller than $\mathrm{tr}\big(H^D_{ \Omega_{\alpha,N}} \big)_-^\sigma$ where the operator has additional Dirichlet conditions imposed at the segments $\left\{x=\pm2^{1/\alpha}N \right\} \cap\Omega_{\alpha,N}$, and this is in turn not smaller than $\mathrm{tr}(H_{\Omega'_{\alpha,N}})_-^\sigma$, in other words,
\begin{equation}
\label{eq2.16}
\mathrm{tr}(H_{\Omega'_{\alpha,N}})_-^\sigma \le 4\,L_{\sigma,1}^{\mathrm{cl}}\Lambda^{\sigma+1}\, N^{-\alpha} \qquad\text{for}\quad \sigma\ge 3/2\,.
\end{equation}
On the other hand, the phase-space estimate (\ref{eq2.12}) gives
$$
\mathrm{tr}\left(H_{\Omega'}\right)_-^\sigma\le 2\pi\,L_{\sigma,1}^{\mathrm{cl}}\,N^{-\alpha}
\,\Lambda^{\sigma+1}\left(\frac{1}{2\alpha}+1\right)\qquad\text{for}\quad \sigma\ge 1\,.
$$
Given $\sigma\ge 3/2$, this can be made much larger than by choosing $\alpha$ small; for $N$ large this difference between the two estimates persists over a large energy interval.

\section{Curved circular cusps in $\mathbb{R}^d$}
\label{s: ddim}
\setcounter{equation}{0}

Let us return now to the subject of Section~\ref{s: twodim} and look how it can be generalized for curved cusps in $\mathbb{R}^d,\,d\ge3$. From the reason which will be given a little later we shall restrict our attention to cusps of a circular cross section. In such a case we will be able to characterize the region as before by specifying the cusp axis and the function determining its radius using curvilinear coordinated as in \cite{CDFK05}.

Let us begin with the region axis. Given the dimension $d\ge3$ we suppose that the axis is a unit-speed $C^{d+2}$-smooth curve $\Gamma:\mathbb{R}\rightarrow\mathbb{R}^d$ which possesses a positively oriented Frenet frame, i.e. a $d$-tuple
$\left\{e_1,\ldots,e_d\right\}$ of functions such that
 \begin{enumerate}[(i)]
 \setlength{\itemsep}{0pt}
 \item $\:e_1=\dot{\Gamma}$,
 \item $\:e_i\in\,C^1(\mathbb{R},\mathbb{R}^d)\,$ holds for any $j=1,\ldots,d$,
 \item $\:\dot{e}_i(s)\,$ lies in the span of $e_1(s),\ldots,e_{i+1}(s)$ for any $j=1,\ldots,d-1$.
 \end{enumerate}
A sufficient condition for existence of such a frame is that the vector values of the derivatives  $\dot{\Gamma}(s), \ddot\Gamma(s),\ldots, \Gamma^{(d-1)}(s)$ are linearly independent for all $s\in\mathbb{R}$; note that this is always satisfied if $d=2$. We have the Frenet-Serret formulae,
$$
\dot{e}_i=\sum_{j=1}^{d-1}\mathcal{K}_{ij}e_j\,,
$$
where $\mathcal{K}_{ij}$ are the entries of the $d\times d$ skew-symmetric matrix of the form
$$
\mathcal{K}=\left(\begin{array}{cccc}
0 &\kappa_1 &\ldots &0
\\
-\kappa_1 &\ldots &\ldots &0
\\
\ldots &\ldots &\ldots & \kappa_{d-1}
\\
0 &\ldots &-\kappa_{d-1} &0
\end{array}\right),
$$
where $\kappa_i:\mathbb{R}\rightarrow\mathbb{R}$ is called the $i$-th curvature of $\Gamma$. Under the assumptions (i)--(iii)  the curvatures are continuous functions of the arc-length parameter $s\in\mathbb{R}$. Consider next a $(d-1)\times(d-1)$ matrix function $\mathcal{R}= (\mathcal{R}_{\mu,\nu})$ determined by the system of differential equations
$$
\dot{\mathcal{R}}_{\mu\nu}+\sum_{\rho=2}^d \mathcal{R}_{\mu,\rho}\mathcal{K}_{\rho,\nu}=0\,,\quad\, \mu,\nu=2,\ldots,d\,,
$$
with the initial conditions at a given point $s_0\in\mathbb{R}$ meaning that $\mathcal{R}(s_0)$ is a rotation matrix in $\mathbb{R}^{d-1}$, i.e. it satisfies the requirements
$$
\det \mathcal{R}(s_0)=1 \qquad\text{and}\qquad \sum_{\rho=2}^d\mathcal{R}_{\mu,\rho}(s_0)
\mathcal{R}_{\nu,\rho}(s_0)=\delta_{\mu,\nu}\,.
$$
Next we associate with $\mathcal{R}(\cdot)$ a $d\times d$ matrix function given by
$$
(\mathcal{R}_{ij}(s)):=\left(\begin{array}{cccc}
1 &0
\\
0 &(\mathcal{R}_{\mu,\nu}(s))
\end{array}\right)
$$
and define the moving frame $\{\widetilde{e}_1, \ldots, \widetilde{e}_d\}\subset \mathbb{R}^d$ along the curve $\Gamma$ by
\begin{equation}
\label{eq3.1}
\widetilde{e}_i:=\sum_{j=1}^d\mathcal{R}_{ij}e_j\,;
\end{equation}
we call it the Tang frame (relative to the given Frenet frame).

Let us pass to the description of the region. Given the Tang frame we can characterize points in the vicinity of $\Gamma$ by means of the corresponding Cartesian coordinates $u_2,\ldots,u_d$ in the normal plane to $\Gamma$ at each point of the curve,
$$
x(s,u_2,\ldots,u_d) :=\Gamma(s)+\sum_{\nu=2}^d\widetilde{e}_\mu(s)u_\mu\,,
$$
in particular, $|u|=\left(\sum_{\nu=2}^d\,u_\nu^2\right)^{1/2}$ measures the radial distance from $\Gamma$. Given a positive function $f:\mathbb{R} \to\mathbb{R}_+$ satisfying the condition (\ref{Omega-cusp}) we define the region
 \begin{equation} \label{Omega-d}
\Omega :=\{ x(s,u_2,\ldots,u_d):\: s\in\mathbb{R},\, |u|<f(s)\, \}\,;
 \end{equation}
in full analogy with the case $d=2$ we assume that the description of $\Omega$ in terms of the curvilinear coordinates makes sense, namely that the radius $f(s)$ is not too large --- to be specified below --- and that the map $(s,u_2,\ldots,u_d) \mapsto x(s,u_2,\ldots,u_d)$ is injective. In a similar way one defines one sided $\Omega^+$ analogous to (\ref{Omega-2+}).

Having described the cusped region we can pass to the operator (\ref{cusp-schroed}) on $L^2(\Omega)$ with a bounded measurable $V\ge0$. Using the again the natural unitary equivalence, as formulated in \cite{CDFK05} for tubes of constant cross section, consisting in rewriting $H_\Omega$ in terms of the curvilinear coordinates and removing the corresponding Jacobian, we pass to the operator on $L^2(\Omega_0)$, where $\Omega_0$ is the straightened region,  $\Omega_0=\left\{ (s,u_1,\ldots,u_{d-1}):\: s\in\mathbb{R},\,|u|<f(s)\right\}$, acting as
\begin{equation}
\label{eq3.4}
H_0=-\partial_1\frac{1}{h^2}\partial_1 -\sum_{\mu=2}^d\partial_\mu^2+W-\widetilde{V}
\end{equation}
with Dirichlet condition at the boundary of the disc, $|u|=f(s)$. Here $\partial_1,\, \partial_\mu$ are the usual shorthands for $\frac{\partial}{\partial s}$ and $\frac{\partial}{\partial u_\mu}$, respectively. Furthermore, $\widetilde{V}(s,u_1,\ldots,u_{d-1}) := V(x(s,u_1,\ldots,u_{d-1}))$ and the curvature-induced part of the potential equals
 $$ 
W:=-\frac{1}{4}\frac{\kappa_1^2}{h^2} +\frac{1}{2}\frac{ h_{11}}{h^3}-\frac{5}{4}\frac{h_1^2}
{h^4}\,,
 $$ 
where
$$
h(s,u_2,\ldots,u_d):= 1-\kappa_1(s)\sum_{\mu=2}^d\mathcal{R}_{\mu2}(s)u_\mu
$$
and the derivatives with respect to $s$ are given explicitly by
\begin{eqnarray*}
&& h_1(\cdot,u)= \sum_{\mu,\alpha=2}^d u_\mu\mathcal{R}_{\mu,\alpha}\dot{\mathcal{K}}_{\alpha,1}-
\sum_{\stackrel{\mu,\alpha=2,\ldots,d}{\beta=1,\ldots,d}}
u_\mu\mathcal{R}_{\mu,\alpha} \mathcal{K}_{\alpha,\beta}\mathcal{K}_{\beta,1}\,,
\\ && h_{11}(\cdot,u)=\sum_{\mu,\alpha=2}^d u_\mu\mathcal{R}_{\mu,\alpha}\ddot{\mathcal{K}}_{\alpha,1}
-\sum_{\stackrel{\mu,\alpha=2,\ldots,d}{\beta=1,\ldots,d}} u_\mu\mathcal{R}_{\mu,\alpha}\left(\dot{\mathcal{K}}_{\alpha,\beta}
\mathcal{K}_{\beta,1} +2\mathcal{K}_{\alpha,\beta}\dot{\mathcal{K}}_{\beta,1}\right)
\\ && \hspace{5em}
+\sum_{\stackrel{\mu,\alpha=2,\ldots,d}{\beta,\gamma=1,\ldots,d}} u_\mu\mathcal{R}_{\mu,\alpha}\mathcal{K}_{\alpha,\beta}\mathcal{K}
_{\beta,\gamma}\mathcal{K}_{\gamma,1}\,.
\end{eqnarray*}
What is crucial in this construction is that we have passed to $H_0$ using the Tang frame (\ref{eq3.1}) because this choice of curvilinear coordinates guarantees that the transformed operator does not contained terms mixing derivatives with respect to the longitudinal and transverse variables.

As in the two-dimensional case our strategy is to estimate the contribution of $W(s,u)$ by a function depending on the longitudinal variable only; we introduce
\begin{eqnarray*}
\lefteqn{W^-(s):=\frac{\kappa_1^2(s)}{4\left(1-f(s)|\kappa_1(s)|\right)^2}
+\frac{f(s)}{2\left(1-f(s)|\kappa_1(s)|\right)^3}\left(\sum_{\mu=2}^d \left(\sum_{\alpha=2}^d\left|\mathcal{R}_{\mu,\alpha}\ddot{\mathcal{K}}
_{\alpha1}\right|\right)^2\right)^{1/2}}
\\ &&
\hspace{-1em} +\frac{f(s)}{2\left(1-f(s)|\kappa_1(s)|\right)^3} \left(\sum_{\mu=2}^d\left(\sum_{\stackrel{\alpha =2,\ldots,d}{\beta=1,\ldots,d}}
\left|\mathcal{R}_{\mu,\alpha}\right| \left(\left|\dot{\mathcal{K}}_{\alpha,\beta} \mathcal{K}_{\beta,1}\right|
+2\left|\mathcal{K}_{\alpha,\beta} \dot{\mathcal{K}}_{\beta,1}\right|\right)\right)^2\right)^{1/2}
\\ &&
+\frac{f(s)}{2\left(1-f(s)|\kappa_1(s)|\right)^3}\left(\sum_{\mu=2}^d
\left(\sum_{\stackrel{\alpha=2,\ldots,d}{\beta,\gamma=1,\ldots,d}}
\left|\mathcal{R}_{\mu,\alpha}\mathcal{K}_{\alpha,\beta}\mathcal{K}_{\beta,\gamma}
\mathcal{K}_{\gamma,1}\right|\right)^2\right)^{1/2}
\\ &&
+\frac{5f(s)}{4\left(1-f(s)|\kappa_1(s)|\right)^4} \left(\sum_{\mu=2}^d\left(\sum_{\alpha=2}^d
\left|\mathcal{R}_{\mu,\alpha} \dot{\mathcal{K}}_{\alpha,1}\right|\right)^2\right)^{1/2}
\\ &&
+\frac{5f(s)}{4\left(1-f(s)|\kappa_1(s)|\right)^4} \left(\sum_{\mu=2}^d\left(\sum_{\stackrel{\alpha=
2,\ldots,d}{\beta=1,\ldots,d}} \left|\mathcal{R}_{\mu,\alpha}\mathcal{K}_{\alpha,\beta}
\mathcal{K}_{\beta,1}\right|\right)^2\right)^{1/2}.
\end{eqnarray*}

Now we are prepared to make the following claim:
\begin{theorem} \label{thm: gen bound}
Let the operator $H_\Omega$ given by (\ref{cusp-schroed}) with a bounded measurable $V\ge0$ correspond to a region $\Omega$ which is not self-intersecting; we assume that it is determined by a  $C^{d+2}$-smooth curve $\Gamma$ and a function $f$ satisfying the condition $\|\kappa_1(\cdot) f(\cdot)\|_\infty<1$, where $\kappa_1$ is the first curvature of $\Gamma$. Then for the negative spectrum of $H_\Omega$ the following inequality holds true,
\begin{eqnarray*}
\lefteqn{\mathrm{tr}\,(H_\Omega)_-^\sigma \le\left\|1+f|\kappa_1|\right\|_\infty^{-2\sigma}\,
L_{\sigma,1}^{\mathrm{cl}} \int_{\mathbb{R}} \sum_{k,m=0,1,\ldots}\biggl(- \left(\frac{j_{k+(d-3)/2,m}}{f(s)}\right)^2}
\\ && \hspace{3em}
+\left\|1+f|\kappa_1|\right\|_\infty^2\biggl(W^-(s) +\|\widetilde{V}(s,\cdot)\|_\infty\biggr)\biggr)_+^{\sigma+1/2} \,\mathrm{d}s\,, \phantom{AAAAAAAAA}
\end{eqnarray*}
where $L_{\sigma,1}^{\mathrm{cl}}$ is the constant (\ref{LTconstant}), the functions $W^-$ and $\widetilde{V}$ have been defined above, and $j_{l,m}$ is the $m$-th positive zero of the first-kind Bessel function $J_l$.
\end{theorem}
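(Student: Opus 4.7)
The plan is to mimic the proof of Theorem~\ref{thm: 2D bound} with the dimensional adjustments dictated by the operator (\ref{eq3.4}). The straightening transformation has already reduced $H_\Omega$ to the unitarily equivalent operator $H_0$ on $L^2(\Omega_0)$ with $\Omega_0=\{(s,u)\in\mathbb{R}\times\mathbb{R}^{d-1}:|u|<f(s)\}$. The Jacobian factor satisfies $h(s,u)\ge 1-f(s)|\kappa_1(s)|$ pointwise, hence $1/h^2\ge \|1+f|\kappa_1|\|_\infty^{-2}$. Using Cauchy--Schwarz in the transverse indices $\mu$ to bound each of the sums appearing in $h_1$ and $h_{11}$, and replacing $h$ in every denominator by its lower bound $1-f|\kappa_1|$, one produces the pointwise inequality $W(s,u)\ge -W^-(s)$ with the $W^-(s)$ written above. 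Combining the two estimates yields, in analogy with (\ref{eq2.5}),
$$H_0\ge \|1+f|\kappa_1|\|_\infty^{-2}\,H_0^-,\qquad H_0^-:=-\Delta_D^{\Omega_0}-\|1+f|\kappa_1|\|_\infty^2\bigl(W^-+\widetilde{V}\bigr),$$
so it suffices to estimate $\mathrm{tr}(H_0^-)_-^\sigma$.

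Next I would repeat the variational extension step from the two-dimensional proof verbatim. Setting $\widehat{\Omega}_0:=\mathbb{R}^d\setminus\overline{\Omega}_0$ and testing with $h=g+v$ for $g\in C_0^\infty(\Omega_0)$ and $v\in C_0^\infty(\widehat{\Omega}_0)$, both extended by zero, the minimax principle together with positivity of $-\Delta_D^{\widehat{\Omega}_0}$ gives
$$\mathrm{tr}(H_0^-)_-^\sigma\le \mathrm{tr}\left(-\frac{\partial^2}{\partial s^2}\otimes I_{L^2(\mathbb{R}^{d-1})}+H(s,\widetilde{V},W^-)\right)_-^\sigma,$$
where $H(s,\widetilde{V},W^-)$ is the negative part of $-\Delta_u-\|1+f|\kappa_1|\|_\infty^2\bigl(W^-(s)+\widetilde{V}(s,\cdot)\bigr)$ on the $(d-1)$-dimensional disc $\{|u|<f(s)\}$ with Dirichlet conditions, extended by the zero operator on its complement. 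The operator-valued Lieb--Thirring inequality of Laptev and Weidl then produces, for every $\sigma\ge 3/2$,
$$\mathrm{tr}(H_0^-)_-^\sigma\le L_{\sigma,1}^{\mathrm{cl}}\int_{\mathbb{R}}\mathrm{tr}\bigl(H(s,\widetilde{V},W^-)\bigr)_-^{\sigma+1/2}\,\mathrm{d}s.$$

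Finally I would bound $\widetilde{V}(s,u)$ from above by $\|\widetilde{V}(s,\cdot)\|_\infty$ in the transverse variables, which reduces the transverse operator to a shifted Dirichlet Laplacian on the ball $B_{f(s)}\subset\mathbb{R}^{d-1}$. By separation of variables in spherical coordinates its eigenvalues are $(j_{k+(d-3)/2,m}/f(s))^2$, with $k$ labelling the spherical-harmonic order and $m$ the radial mode, and the potential is a constant in $u$. Evaluating the trace as a sum over the negative parts of $-(j_{k+(d-3)/2,m}/f(s))^2+\|1+f|\kappa_1|\|_\infty^2(W^-(s)+\|\widetilde{V}(s,\cdot)\|_\infty)$ and undoing the prefactor $\|1+f|\kappa_1|\|_\infty^{-2}$ from the comparison of $H_0$ with $H_0^-$, together with the unitary equivalence of $H_\Omega$ and $H_0$, delivers the stated inequality.

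The main obstacle is the first step: the expressions for $h_1$ and $h_{11}$ are long sums of products of entries of $\mathcal{R}$, $\mathcal{K}$ and their derivatives, and producing a lower bound on $W$ that depends on $s$ alone demands a careful, if routine, application of Cauchy--Schwarz in the $\mu$-index together with pointwise $L^\infty$-estimation of the curvature-dependent coefficients in order to arrive precisely at the six terms constituting $W^-(s)$. The remainder is a transcription of the two-dimensional argument; the one genuinely new ingredient is the classical identification of Dirichlet disc eigenvalues with squared Bessel-function zeros, which accounts for the index $k+(d-3)/2$ appearing in the statement.
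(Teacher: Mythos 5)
Your proposal follows the paper's own argument step for step: the same pointwise bounds on the Jacobian $h$ via $\sum_\mu\mathcal{R}_{\mu 2}^2=1$, the same comparison operator $H_0^-$, the same extension-by-zero variational trick feeding into the Laptev--Weidl operator-valued Lieb--Thirring inequality, and the same Bessel-zero identification for the Dirichlet ball eigenvalues. The proof is correct and takes essentially the same route as the paper.
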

\begin{proof}[Proof:]
Using the mentioned unitary equivalence it suffices to establish the claim for the operator $H_0$. Since
$\sum_{\mu=2}^d\mathcal{R}_{\mu2}^2=1$, cf.~\cite{CDFK05}, one can estimate $H_0$ with the help of the operator
$$
H_0^-=-\sum_{\mu=1}^d\partial_\mu^2 -\left\|1+f|\kappa_1|\right\|_\infty^2
\left(W^-+\widetilde{V}\right)
$$
in the following way
\begin{equation}
\label{eq3.8}
H_0\ge\left\|1+f|\kappa_1|\right\|_\infty^{-2}H_0^-\,.
\end{equation}
We follow the same route as in the two-dimensional case starting with the estimate
\begin{eqnarray*}
\lefteqn{\|\nabla\,g\|^2_{L^2(\Omega_0)} -\left\|1+f|\kappa_1|\right\|_\infty^2
\int_{\Omega_0}\, (W^-+\widetilde{V})(s,u) g(s,u)\|^2\:\mathrm{d}s\, \mathrm{d}u_2\ldots\,\mathrm{d}u_d}
\\ && \hspace{-1em}
\ge\int_{\Omega_0}\left|\frac{\partial g}{\partial s}(s,u)\right|^2\,\mathrm{d}s\,\mathrm{d}u_2\ldots\,\mathrm{d}u_d+\int_{\mathbb{R}}\left
\langle\,H(s,\widetilde{V},W^-)\,g(s,\cdot)
,\,g(s,\cdot)\right\rangle_{L^2\left(D_{f(s)}\right)}\,\mathrm{d}s
\end{eqnarray*}
for $g\in\,C_0^\infty(\Omega_0)$, where $u$ is a shorthand for $(u_2,\ldots,u_d)$ and $D_\varrho:=\{ u:\: |u|<\varrho\}$, and furthermore, $H(s,\widetilde{V},W^-)$ is the negative part of the $(d-1)$-dimensional Schr\"{o}dinger operator
$$
-\partial^2_2-\ldots-\partial_d^2 -\left\|1+f|\kappa_1|\right
\|_{\infty}^2\left(W^-+\widetilde{V}\right)
$$
in the disc $D_{f(s)}$ with Dirichlet conditions on its boundary. Next we extend the operator to the complement of $\Omega_0$ in $\mathbb{R}^d$ using $H(s,V^-,\widetilde{W})$ defined on the larger domain $H^1\left(\mathbb{R}, L^2(\mathbb{R}^{d-1})\right)$ and acting as zero of function supported outside $D_{f(s)}$; in analogy with the two-dimensional case this leads to the estimate
$$
\mathrm{tr}\,(H^-_0)_-^\sigma \le\,tr\left(-\partial_s^2 \otimes\,I_{L^2(\mathbb{R}^{d-1})}+
H(s,\widetilde{V},W^-)\right)_-^\sigma\,,\quad\sigma\ge0\,,
$$
and Lieb-Thirring inequality for operator valued-potentials gives
 $$
\mathrm{tr}\,(H^-_0)_-^\sigma\le\, L_{\sigma,1}^{\mathrm{cl}} \int_{\mathbb{R}}\mathrm{tr}\left(H\left(s,\widetilde{V},W^-\right)
\right)_-^{\sigma+1/2}\,\mathrm{d}s\,,\quad\sigma\ge3/2\,.
 $$
To estimate the right-hand side we introduce the following operator in $L^2(D_{f(s)})$,
$$
L\left(s,\widetilde{V},W^-\right): =-\Delta_D^{D_{f(s)}}
-\left\|1+f|\kappa_1|\right\|_{\infty}^2 \left(W^-(s)+\|\widetilde{V}(s,\cdot)\|_\infty\right),
$$
where $\Delta_D^{D_{f(s)}}$ is the corresponding Dirichlet Laplacian; the minimax principle then yields
\begin{equation}
\label{eq3.11}
\mathrm{tr}\,(H^-_0)_-^\sigma\le\, L_{\sigma,1}^{\mathrm{cl}}\int_{\mathbb{R}} \mathrm{tr}\left(L\left(s,\widetilde{V},W^-\right)
\right)_-^{\sigma+1/2}\,\mathrm{d}s
\end{equation}
for all $\sigma\ge3/2$. Since the potential is independent of the transverse variables, the spectrum of $L\left(s, \widetilde{V},W^-\right)$ is easy to find: its eigenvalues are
\begin{equation}
\label{eq3.12}
\mu_{k,m} -\left\|1+f|\kappa_1|\right\|_{\infty}^2
\left(W^-(s)+\|\widetilde{V}(s,\cdot)\|_\infty\right)\,, \quad k,m=1,2,\ldots\:,
\end{equation}
where $\mu_{k,m}$ are the eigenvalues of $-\Delta_D^{D_{f(s)}}$. The spectrum of the last named operator is well known \cite{BEK96}: a complete set of eigenfunctions
can be represented in the generalized spherical coordinates, $r$ and $d-2$ angles $\Theta=(\theta_1,\ldots,\theta_{d-3}, \varphi)$, as
$$
\phi_{k,m}(r,\Theta)= r^{1-\frac{d-1}{2}}J_{k+\frac{d-3}{2}}(\kappa_{k,m}r) Y_{k+\frac{d-1}{2},m}(\Theta)\,,
$$
where $J_{j+\frac{d-3}{2}}$ are Bessel functions and $Y_{j+\frac{d-1}{2},m}$ hyperspherical harmonics. Since the functions $\phi_{k,m}$ have to satisfy Dirichlet condition at the boundary of the $(d-1)$-dimensional disc, the factors  $\kappa_{k,m}$ are determined by the requirement
$$
J_{k+\frac{d-3}{2}}\big(\kappa_{k,m}f(s)\big)=0\,,
$$
which yields
\begin{equation}
\label{eq3.13}
\mu_{k,m}=\kappa^2_{k,m}= \left(\frac{j_{k+(d-3)/2,m}}{f(s)}\right)^2\,,\quad k,m=0,1,2,\ldots\,,
\end{equation}
where $j_{l,m}$ is the $m$-th positive zero of the Bessel function $J_l$. Using the unitary equivalence of operators
$H_\Omega$ and $H_0$ in combination with the relations  (\ref{eq3.8})--(\ref{eq3.13}) we arrive at the sought conclusion.
\end{proof}

Note that the result given in Theorem\ref{thm: gen bound} extends easily to regions with one-sided cusps in analogy with the claim of Corollary~\ref{2cor}.

\section{\mbox{Twisted cusps of non-circular cross section in $\mathbb{R}^3$}}
\label{s: twisted}
\setcounter{equation}{0}

Let us now look into another type of nontrivial cusp geometry. As before we will suppose that it cross section change along the curve playing role of the axis, however, now we allow it to be non-circular. Consider an open connected set $\omega_0\subset \mathbb{R}^2$ and a positive function $f:\: \mathbb{R} \to \mathbb{R}$ satisfying the condition (\ref{Omega-cusp}), and set
\begin{equation}
\label{eq4.1}
\omega_s :=f(s)\omega_0\,,
\end{equation}
where we use the conventional shorthand $\alpha A:= \{(\alpha x,\alpha y):\,(x,y)\in A\}$ for $\alpha>0$ and $A\subset \mathbb{R}^2$. Using (\ref{eq4.1}) we define a straight cusped region determined by $\omega_o$ and the function $f$ as  $\Omega_0:= \left\{ (s,x,y):\: s\in\mathbb{R},\,(x,y)\in\omega_s\right\}$ with $\omega_s$.

In the next step we twist the region. We fix a $C^1$-smooth function $\theta:\mathbb{R}\rightarrow\mathbb{R}$ with bounded derivative, $\|\dot \theta\|_\infty <\infty$, and introduce the region $\Omega_\theta$ as the image
\begin{equation}
\label{eq4.2}
\Omega_\theta:=\mathfrak{L}_\theta(\Omega_0)\,,
\end{equation}
where the map $\mathfrak{L}_\theta:\,\mathbb{R}^3\to \mathbb{R}^3$ is given by
\begin{equation}
\label{eq4.3}
\mathfrak{L}_\theta(s,x,y):= \left(s,x\cos\theta(s)+y\sin\theta(s),-x\sin\theta(s)+y\cos\theta(s)\right)\,.
\end{equation}
We are interested primarily in the situation when the region is twisted, that is
 \begin{enumerate}[(i)]
 \setlength{\itemsep}{0pt}
 \item the function $\theta$ is not constant\,,
 \item $\omega_0$ is not rotationally symmetric with respect to the origin in $\mathbb{R}^2$.
 \end{enumerate}
If the first condition is not valid we have a straight region with the cross section rotated by a fixed angle, $\omega_{0,\theta}:= \left\{x\cos\theta+y\sin\theta ,-x\sin\theta+y\cos\theta\,:\,(x,y)\in\omega_0 \right\}$, while if $\omega_0$ is rotationally symmetric (i.e. $\omega_{0,\theta} = \omega_0$ up a set of zero capacity for any $\theta\in(0,2\pi)$) the choice of $\theta$ does not matter.

To formulate the result of this section, we need a few more preliminaries. First of all, we introduce $\varrho := \sup_{(x,y)\in\omega_0}\sqrt{x^2+y^2}$ and assume that
\begin{equation}
\label{eq4.4}
\varrho\|f\dot{\theta}\|_\infty<1\,.
\end{equation}
Next we we set $\widetilde{V}(s,x,y) := V(\mathfrak{L}_\theta(s,x,y))$ in analogy with the corresponding definitions in the previous sections, and finally, we introduce the operator
 $$
L_\mathrm{trans} := -i\left(x\frac{\partial}{\partial y}-y\frac{\partial}{\partial x}\right)\,,\quad \mathrm{Dom} \big(L_\mathrm{trans} \big) = \mathcal{H}^1_0 (\omega_0)\,,
 $$
of the angular momentum component canonically associated with rotations in the transverse plane. Now we are ready to state the result.

\begin{theorem} \label{twist bound}
Let $H_{\Omega_\theta}$ be the operator (\ref{cusp-schroed}) referring to the region $\Omega_\theta$ defined by (\ref{eq4.2}) and (\ref{eq4.3}) with a potential $V\ge 0$ which is bounded and measurable. Under the assumption (\ref{eq4.4}) the negative spectrum of $H_{\Omega_\theta}$ the inequality
$$
\mathrm{tr}\big(H_D^{\Omega_\theta}\big)_-^\sigma \le L_{\sigma,1}^{\mathrm{cl}}
\left(1-\varrho\|f\dot{\theta}\|_{\infty}\right)^\sigma \int_{\mathbb{R}}\sum_{j=1}^\infty \left(-\frac{\lambda_{0,j}(s)}{f^2(s)} +\frac{\|\widetilde{V}(s,\cdot)\|_\infty}{1-\varrho\|f\dot{\theta}\|_\infty}
\right)_+^{\sigma+1/2}\,\mathrm{d}s
$$
holds true for $\sigma\ge3/2$, where $L_{\sigma,1}^{\mathrm{cl}}$ is the constant (\ref{LTconstant}) and
$\lambda_{0,j}(s),\, j=1,2,\ldots,\,$ are the eigenvalues of the operator
$$
H_{f,\theta}(s):=-\Delta_D^{\omega_0}+f^2(s)\dot{\theta}^2(s) L_\mathrm{trans}^2
$$
defined on the domain $\mathcal{H}^2_0 (\omega_0)$ in $L^2(\omega_0)$.
\end{theorem}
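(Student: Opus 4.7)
The plan is to adapt the strategy used for Theorem~\ref{thm: 2D bound}: straighten out the twist via a unitary transformation, derive a quadratic-form lower bound whose cross-sectional part matches the operator $H_{f,\theta}(s)$, and then apply the Lieb-Thirring inequality for operator-valued potentials from [LW00]. The new analytic input, compared with the curved-cusp case, is how the angular-momentum term produced by the untwisting interacts with the transverse Dirichlet energy.

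First I would introduce the unitary $U_\theta:L^2(\Omega_\theta)\to L^2(\Omega_0)$ acting as $(U_\theta\psi)(s,\xi,\eta):=\psi(\mathfrak{L}_\theta(s,\xi,\eta))$, where $\Omega_0:=\{(s,\xi,\eta):s\in\mathbb R,\,(\xi,\eta)\in\omega_s\}$ is the straight non-twisted cusp. Because $\mathfrak{L}_\theta$ rotates each transverse plane and $L_\mathrm{trans}=-i(\xi\partial_\eta-\eta\partial_\xi)$ is $s$-independent, a routine chain-rule computation gives
$$U_\theta H_{\Omega_\theta}U_\theta^*=-\bigl(\partial_s+i\dot\theta\,L_\mathrm{trans}\bigr)^2-\Delta_\perp-\widetilde V,$$
with Dirichlet condition at $\partial\omega_s$, so that the associated quadratic form on $\phi\in H^1_0(\Omega_0)$ equals
$$q_\theta[\phi]=\int_{\Omega_0}\Bigl(\bigl|(\partial_s+i\dot\theta L_\mathrm{trans})\phi\bigr|^2+|\nabla_\perp\phi|^2-\widetilde V|\phi|^2\Bigr)\,\mathrm ds\,\mathrm d\xi\,\mathrm d\eta.$$

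The key analytic step is a pointwise lower bound on the integrand that separates the longitudinal, transverse and twist contributions. From the Cauchy-Schwarz estimate of the cross-term in $|(\partial_s+i\dot\theta L_\mathrm{trans})\phi|^2$ one obtains, for every $\epsilon\in(0,1)$,
$$\bigl|(\partial_s+i\dot\theta L_\mathrm{trans})\phi\bigr|^2\ge(1-\epsilon)|\partial_s\phi|^2+\dot\theta^2|L_\mathrm{trans}\phi|^2-\epsilon^{-1}\dot\theta^2|L_\mathrm{trans}\phi|^2,$$
while the excess negative contribution is controlled via the pointwise bound $|L_\mathrm{trans}\phi|^2\le(\xi^2+\eta^2)|\nabla_\perp\phi|^2\le f^2(s)\varrho^2|\nabla_\perp\phi|^2$, valid because $(\xi,\eta)\in f(s)\omega_0$. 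Setting $\alpha:=\varrho\|f\dot\theta\|_\infty$, which by (\ref{eq4.4}) lies in $(0,1)$, and choosing $\epsilon=\alpha$, these two inequalities combine, after a final use of $1-\alpha\le1$ to weaken the surviving $\dot\theta^2|L_\mathrm{trans}\phi|^2$ term, to yield
$$\bigl|(\partial_s+i\dot\theta L_\mathrm{trans})\phi\bigr|^2+|\nabla_\perp\phi|^2\ge(1-\alpha)\bigl[|\partial_s\phi|^2+|\nabla_\perp\phi|^2+\dot\theta^2|L_\mathrm{trans}\phi|^2\bigr].$$
Hence $U_\theta H_{\Omega_\theta}U_\theta^*\ge(1-\alpha)\widehat H$, with cross-sectional operator on $L^2(\omega_s)$ given by $\widehat H_\perp(s):=-\Delta_\perp+\dot\theta^2(s)L_\mathrm{trans}^2-(1-\alpha)^{-1}\widetilde V(s,\cdot,\cdot)$.

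The remaining steps mirror the end of the proof of Theorem~\ref{thm: 2D bound}. I would bound $\widetilde V(s,\cdot,\cdot)$ pointwise by $\|\widetilde V(s,\cdot)\|_\infty$, enlarge the form domain by the bracketing device (adding a nonnegative Dirichlet Laplacian on the complement of $\omega_s$ in $\mathbb R^2$), and invoke the Lieb-Thirring inequality for operator-valued potentials [LW00] to conclude, for $\sigma\ge3/2$,
$$\mathrm{tr}(H_{\Omega_\theta})_-^\sigma\le(1-\alpha)^\sigma L_{\sigma,1}^{\mathrm{cl}}\int_{\mathbb R}\mathrm{tr}\bigl(-\Delta_\perp+\dot\theta^2(s)L_\mathrm{trans}^2-(1-\alpha)^{-1}\|\widetilde V(s,\cdot)\|_\infty\bigr)_-^{\sigma+1/2}\mathrm ds.$$
The unitary rescaling $(\xi,\eta)=f(s)(x,y)$ sending $\omega_s$ onto $\omega_0$, together with the scale-invariance of $L_\mathrm{trans}$, identifies $-\Delta_\perp+\dot\theta^2(s)L_\mathrm{trans}^2$ on $L^2(\omega_s)$ with $f^{-2}(s)H_{f,\theta}(s)$ on $L^2(\omega_0)$, so its eigenvalues equal $\lambda_{0,j}(s)/f^2(s)$; rewriting the negative-part trace as a sum of positive parts then produces the asserted bound. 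The main technical obstacle is precisely the pointwise estimate leading to the key display above: the calibration $\epsilon=\alpha$ is the unique choice that lets $|\partial_s\phi|^2$ and $|\nabla_\perp\phi|^2$ both acquire coefficient $(1-\alpha)$, so that after factoring this common scalar the twist-enhanced transverse operator $H_{f,\theta}(s)$ appears in the bound with the exact coefficients prescribed in the theorem rather than weaker ones.
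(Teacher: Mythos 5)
Your proposal is correct and follows essentially the same route as the paper's proof: untwist via $U_\theta$, obtain the lower bound $H_0\ge(1-\varrho\|f\dot\theta\|_\infty)H_0^-$ by combining Cauchy--Schwarz (or, equivalently, Young with $\epsilon=\alpha$) with the pointwise estimate $|L_\mathrm{trans}\phi|\le\varrho f(s)|\nabla_\perp\phi|$, then extend to the complement, invoke the operator-valued Lieb--Thirring inequality of [LW00], and read off the cross-sectional eigenvalues via the scaling $\omega_s\to\omega_0$. The cosmetic difference -- you calibrate an explicit Young parameter while the paper simply applies $2ab\le a^2+b^2$ after absorbing $\varrho f(s)$ -- yields the identical form bound (4.5), so there is no substantive divergence.
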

\begin{proof}[Proof:]
As before we employ suitable curvilinear coordinates, this time to ``untwist'' the region. We define a unitary operator from $L^2(\Omega_\theta)$ to $L^2(\Omega_0)$ by $U_\theta\psi:= \psi\circ\mathfrak{L}_\theta$ which allows us to pass from $H_{\Omega_\theta}^D$ to the operator
 $$ 
H_0:=U_\theta\left(H_{\Omega_\theta}^D\right)U^{-1}_\theta
 $$ 
in $L^2(\Omega_0)$. From \cite{KZ11} we know that $H_0$
is the self-adjoint operator associated with the quadratic form
$$
Q_0:\: Q_0[\psi]:=\|\partial_s\psi+i\dot{\theta} L_\mathrm{trans} \psi\|^2 +\|\nabla_\mathrm{trans}\psi\|^2
-\int_{\Omega_0} \big(\widetilde{V}|\psi|^2\big) (s,x,y)\,\mathrm{d}s\,\mathrm{d}x\,\mathrm{d}y
$$
defined on $\mathcal{H}_0^1$, where $\nabla_\mathrm{trans}:= \left( \partial_x,\partial_y\right)$ and the norms refer to $L^2(\Omega_0)$. In order to estimate the form we note that
 $$
|L_\mathrm{trans}\phi|\le \varrho f(s)|\nabla_\mathrm{trans}\phi|
$$
holds for any function $\phi\in\mathcal{H}_0^1(\omega_s)$, hence using Cauchy-Schwarz we get
$$
2\,\mathrm{Re}\left|\int_{\Omega_0} \dot{\theta}(s) \big(\partial_s\psi\, \overline{L_\mathrm{trans}\psi}\big) (s,x,y) \,\mathrm{d}s\,\mathrm{d}x\,\mathrm{d}y\right| \le
\varrho\|f\dot{\theta}\|_\infty(\|\partial_s\psi\|^2 +\|\nabla'\psi\|^2)\,,
$$
where the last two norm refer again to $L^2(\Omega_0)$, which in turn yields
\begin{eqnarray}
\lefteqn{Q_0[\psi] \ge \big(1-\varrho\|f\dot{\theta}\|_{\infty}\big) \big(\|\nabla\psi\|^2+
\|\dot{\theta}L_\mathrm{trans}\psi\|^2 \big)} \nonumber \\[.5em] && \label{eq4.5} \hspace{1em}
-\int_{\Omega_0}\|\widetilde{V}(s,\cdot)\|_\infty|\psi (s,x,y)|^2\, \mathrm{d}s\,\mathrm{d}x\,\mathrm{d}y\,.
\end{eqnarray}
Introducing thus the operator
$$
H^-_0=-\Delta_D^{\Omega_0}+\dot{\theta}^2(s) L_\mathrm{trans}^2 -\frac{1}{1-\varrho\|f\dot{\theta}\|
_\infty}\|\widetilde{V}(s,\cdot)\|_\infty
$$
with the domain $\mathcal{H}_0^2(\Omega_0)$, we get from (\ref{eq4.5}) the following lower bound,
\begin{equation}
\label{eq4.6}
H_0\ge \big(1-\varrho\|f\dot{\theta}\|_\infty \big)H_0^-\,,
\end{equation}
which makes sense in view of the condition (\ref{eq4.4}); by
minimax principle it is then enough to establish a bound to the negative spectrum of the operator $H_0^-$. For any $u\in\,C_0^\infty(\Omega_0)$ we can write
\begin{eqnarray*}
\lefteqn{\|\nabla\,u\|^2+\|\dot{\theta}L_\mathrm{trans}u\|^2-
\frac{1}{1-\varrho\|f\dot{\theta}\|_\infty}\int_{\Omega_0}
\|\widetilde{V}(s,\cdot)\|_\infty|u(s,x,y)|^2\,\mathrm{d}s\, \mathrm{d}x\,\mathrm{d}y} \\ && =
\int_{\Omega_0}|\partial_s u(s,x,y)|^2\, \mathrm{d}s\,\mathrm{d}x
\,\mathrm{d}y +\int_{\mathbb{R}}\, \mathrm{d}s\int_{\omega_s} \bigg(|\partial_x u(s,x,y)|^2 +|\partial_y u(s,x,y)|^2
\\ &&
\hspace{1em} +\dot{\theta}^2(s) \big|(L_\mathrm{trans}u)(s,x,y)\big|^2-
\frac{1}{1-\varrho\|f\dot{\theta}\|_\infty} \|\widetilde{V}(s, \cdot)\|_\infty|u(s,x,y)|^2\bigg)\,\mathrm{d}x\, \mathrm{d}y
\\ &&
\ge\int_{\Omega_0}|\partial_s u(s,x,y)|^2\,\mathrm{d}s\,\mathrm{d}x\,
\mathrm{d}y+\int_{\mathbb{R}}\left\langle\,H(s,\widetilde{V})\,u(s,\cdot),\,
u(s,\cdot)\right\rangle_{L^2(\omega_s)}\,\mathrm{d}\,s\,,
\end{eqnarray*}
where the norm without a label refer again to $L^2(\Omega_0)$ and $H(s,\widetilde{V})$ is the negative part of two-dimensional Schr\"{o}dinger operator
\begin{equation}
\label{eq4.7}
-\Delta_D^{\omega_s}+\dot{\theta}^2(s) L_\mathrm{trans}^2 -\frac{1}{1-\varrho\|f\dot{\theta}\|_\infty}\|\widetilde{V}(s, \cdot)\|_\infty
\end{equation}
defined on $\mathcal{H}_0^2(\omega_s)$. The next step is analogous to what we did in the proofs of Theorems~\ref{thm: 2D bound} and \ref{thm: gen bound}: we extend the operator $H(s,\widetilde{V})$ to the whole $\mathbb{R}^2$ by regarding it as a direct sum with zero component in $C_0^\infty\left(\mathbb{R}^2\setminus \overline{\omega_s}\right)$. For a function $g=u+v$ with $u\in\,C_0^\infty(\Omega_0)$ and $v\in\,C_0^\infty(\widehat{\Omega}_0)$ extended
by zero in the complement regions in $\mathbb{R}^3$ we have thus the inequality
\begin{eqnarray*}
\lefteqn{\|\nabla\,u\|^2 + \|\nabla\,v\|^2_{L^2(\widehat{\Omega}_0)} +\|\dot{\theta}L_\mathrm{trans}u\|^2} \\ &&
\hspace{1.5em} -\frac{1}{1-\varrho\|f\dot{\theta}\|_\infty}\int_{\Omega_0}
\|\widetilde{V}(s,\cdot)\|_\infty|u(s,x,y)|^2\,\mathrm{d}s\, \mathrm{d}x\,\mathrm{d}y
\\ &&
\ge\int_{\Omega_0}|\partial_s g(s,x,y)|^2\,\mathrm{d}s\,\mathrm{d}x\,
\mathrm{d}y+\int_{\mathbb{R}}\left\langle\,H(s,\widetilde{V})\,u(s,\cdot),\,
u(s,\cdot)\right\rangle_{L^2(\mathbb{R}^2)}\,\mathrm{d}s\,,
\end{eqnarray*}
valid for all $g\in\,C_0^\infty\left(\mathbb{R}^3 \backslash \partial\Omega_0\right)$. Its left-hand side of is the form associated with the operator $H_0^-\oplus \big(-\Delta_D^{\widehat{\Omega}_0} \big)$ while the right-hand side is associated with the operator $-\partial^2_s\otimes\, I_{L^2(\mathbb{R}^2)}+H(s,\widetilde{V})$ defined on the enlarged domain $\mathcal{H}^1\left(\mathbb{R}, L^2(\mathbb{R}^2) \right)$. Using the positivity of $-\Delta_D^{\widehat{\Omega}_0}$ we get
 $$
\mathrm{tr}\,\left(H^-_0\right)_-^\sigma \le \,\mathrm{tr}\left(-\partial^2_s\otimes\,I_{L^2(\mathbb{R}^2)}+
H(s,\widetilde{V})\right)_-^\sigma\,,\quad \sigma\ge 0\,,
 $$
hence the Lieb-Thirring inequality for operator-valued potentials yields
\begin{equation}
\label{eq4.10}
\mathrm{tr}\,\left(H^-_0\right)_-^\sigma\le\,L_{\sigma,1}^{\mathrm{cl}}
\int_{\mathbb{R}} \mathrm{tr} \,H(s,\widetilde{V})_-^{\sigma+1/2}\,\mathrm{d}s\,,\quad \sigma\ge3/2\,,
\end{equation}
with the semiclassical constant $L_{\sigma,1}^{\mathrm{cl}}$ given by (\ref{LTconstant}). Combining thus the unitary equivalence of  $H_D^{\Omega_\theta}$ and $H_0$ with the inequalities (\ref{eq4.6}), (\ref{eq4.10}) and the condition (\ref{eq4.4}) we get
\begin{equation}
\label{eq4.11}
\mathrm{tr}\left(H_D^{\Omega_\theta}\right)_-^\sigma \le\,L_{\sigma,1}^{\mathrm{cl}}
\left(1-\varrho\|f\dot{\theta}\|_{\infty}\right)^\sigma \int_{\mathbb{R}} \mathrm{tr}\,H(s,\widetilde{V})_-^{\sigma+1/2}\,
\mathrm{d}s \quad\mathrm{for}\quad \sigma\ge3/2\,.
\end{equation}
It remains to determine the eigenvalues of the operator (\ref{eq4.7}). Since the potential in it is independent of the transverse variables, it is easy to see that they are
 $$
\frac{\lambda_{0,j}(s)}{f^2(s)}-\frac{1}{1-\varrho\|f\dot{\theta}\|_\infty} \|\widetilde{V}(s,\cdot)\|_\infty\,,\quad j=1,2,\ldots\,,
 $$
where $\lambda_{0,j}(s),\,j=1,2,\ldots$ are the eigenvalues of the operator $H_{f,\theta}(s)$ defined in the theorem. Combining this result with (\ref{eq4.11}) we conclude the proof.
\end{proof}

Once more, the result of Theorem~\ref{twist bound} can be easily extended to twisted regions with one-sided cusps in analogy with the claim of Corollary~\ref{2cor}.

\begin{remark}
{\rm
Similarly as in Remark~\ref{thickend}a we have not used the condition (\ref{Omega-cusp}) which makes the result applicable to twisted tubular regions which do not shrink to zero at infinity as long as we are interested in the spectrum below $\inf \sigma_\mathrm{ess}(H_{\Omega_\theta})$. Note that the latter quantity depends on the functions $f$ and $\theta$. For instance, for a constant $f$ and noncircular $\omega$ an asymptotically constant and positive $\theta$ pushes the threshold up in comparison with an untwisted tube. A local slowdown of the twist then gives rise to a discrete spectrum \cite{EK05} which may be infinite if $\theta$ is not compactly supported; its accumulation rate depends on the asymptotic behavior of $\theta$ -- cf.~\cite{BKRS09}. Similar effects may be expected here if the effective attraction due to twist slowdown is replaced by the increased thickness at the cusp base.
}
\end{remark}


\bigskip

\begin{flushleft}

Pavel Exner and Diana Barseghyan

\smallskip

Doppler Institute for Mathematical Physics and Applied
Mathematics \\ B\v{r}ehov\'{a} 7, 11519 Prague \\ and  Nuclear
Physics Institute ASCR \\ 25068 \v{R}e\v{z} near Prague, Czechia

\smallskip

Email: exner@ujf.cas.cz, dianabar@ujf.cas.cz

\end{flushleft}

\end{document}